\newcommand{\red}{{\rm red}}
\newcommand{\blue}{{\rm blue}}
\newtheorem{lemma}{Lemma}
\newtheorem{definition}{Definition}
\newtheorem{theorem}{Theorem}
\newtheorem{rrule}{Reduction Rule}
\newcommand{\desc}{\psi}
\newenvironment{parnamedefn}[4]{
\par\addvspace{0.4\baselineskip}\fbox{%
\begin{minipage}[t]{0.9\linewidth}%
\begin{tabular}{p{18mm}p{115mm}}
    \multicolumn{2}{l}{{{#1}}} \\
        \textsl{Input:} & {#2} \\ 
        \textsl{Parameter:} & {#3} \\ 
        \textsl{Question:} & {#4} \\
   \end{tabular}
\end{minipage}}\par\addvspace{0.4\baselineskip}
}
\begin{document}

\title{Parameterized Algorithms for Load Coloring Problem}

\author{
Gregory Gutin and Mark Jones\\
\small Department of Computer Science\\[-3pt]
\small  Royal Holloway, University of London\\[-3pt]
\small Egham, Surrey TW20 0EX, UK\\[-3pt]
\small \texttt{gutin|markj@cs.rhul.ac.uk}
}
\date{}
\maketitle

\newenvironment{compress}{\baselineskip=10pt}{\par}
\begin{abstract}
\noindent
One way to state the Load Coloring Problem (LCP) is as follows.
Let  $G=(V,E)$ be graph and let  $f:V\rightarrow \{\red, \blue\}$ be a 2-coloring. An edge $e\in E$ is called red (blue) if both end-vertices of $e$ are red (blue).
For a 2-coloring $f$, let $r'_f$ and $b'_f$ be the number of red and blue edges and let $\mu_f(G)=\min\{r'_f,b'_f\}$. Let $\mu(G)$ be the maximum of $\mu_f(G)$ over all 2-colorings. 

We introduce the parameterized problem $k$-LCP of deciding whether $\mu(G)\ge k$, where $k$ is the parameter.
We prove that this problem admits a kernel with at most $7k$. Ahuja et al. (2007) proved that one can find an optimal 2-coloring on trees in polynomial time. We generalize this by showing that an optimal 2-coloring on  graphs with tree decomposition of width $t$ can be found in time 
$O^*(2^t)$. We also show that either $G$ is a Yes-instance of $k$-LCP or the treewidth of $G$ is at most $2k$. Thus, $k$-LCP can be solved in time $O^*(4^k).$
\end{abstract}

\section{Introduction}\label{section:intro}

For a graph $G=(V,E)$ with $n$ vertices, $m$ edges and maximum vertex degree $\Delta$, the load distribution of a 2-coloring $f:V\rightarrow \{\red, \blue\}$ is a pair $(r_f,b_f),$ where $r_f$ is the number of edges with at least one end-vertex colored red and $b_f$ is the number of edges with at least one end-vertex colored blue. We wish to find a coloring $f$ such that the function $\lambda_f(G):=\max\{r_f,b_f\}$ is minimized. We will denote this minimum by $\lambda(G)$ and call this problem {\sc Load Coloring Problem} (LCP). 
The LCP arises in Wavelength Division Multiplexing, the technology used for constructing optical communication networks
\cite{AhBaDoPrSr,WDMbook}. 
Ahuja et al. \cite{AhBaDoPrSr} proved that the problem is NP-hard and gave a polynomial time algorithm for optimal colorings of trees. For graphs $G$ with genus $g>0$, Ahuja et al. \cite{AhBaDoPrSr} showed that a 2-coloring $f$ such that $\lambda_f(G)\le \lambda(G)(1+o(1))$ can be computed in $O(n+g\log n)$-time, if the maximum degree satisfies $\Delta=o(\frac{m^2}{ng})$ and an embedding is given.

For a 2-coloring $f:V\rightarrow \{\red, \blue\}$, let $r'_f$ and $b'_f$ be the number of edges whose end-vertices are both red and blue, respectively (we call such edges {\em red} and {\em blue}, respectively). Let $\mu_f(G):=\min\{r'_f,b'_f\}$ and let $\mu(G)$ be the maximum of $\mu_f(G)$ over all 2-colorings of $V$. It is not hard to see (and it is proved in Remark 1.1 of \cite{AhBaDoPrSr}) that $\lambda(G)=m-\mu(G)$ and so the LCP is equivalent to maximizing $\mu_f(G)$ over all 2-colorings of $V$.

In this paper we introduce and study the following parameterization of LCP.

\begin{parnamedefn}%
   {{\sc $k$-Load Coloring Problem} ($k$-LCP)}%
   {A graph $G=(V,E)$ and an integer $k$.}%
   {$k$}%
   {Is $\mu(G)\ge k$ ? (Equivalently, is $\lambda(G)\le m-k$?)}%
 \end{parnamedefn}
 
We provide basics on parameterized complexity and tree decompositions of graphs in the next section. 
In Section  \ref{sec:ker},  we show that $k$-LCP admits a kernel with at most $7k$ vertices. Interestingly, to achieve this linear bound, only two simple reduction rules are used.
In Section \ref{sec:tw}, we generalise the result of Ahuja et al. \cite{AhBaDoPrSr} on trees by showing that an optimal 2-coloring for graphs with tree decomposition of width $t$ can be obtained in time
$2^tn^{O(1)}.$ We also show that either $G$ is a Yes-instance of $k$-LCP or the treewidth of $G$ is at most $2k$. As a result, $k$-LCP can be solved in time $4^kn^{O(1)}.$
We conclude the paper in Section \ref{sec:concl} by stating some open problems.

\section{Basics on Fixed-Parameter Tractability, Kernelization and Tree Decompositions}\label{sec:fpt}

A \emph{parameterized problem} is a subset $L\subseteq \Sigma^* \times
\mathbb{N}$ over a finite alphabet $\Sigma$. $L$ is
\emph{fixed-parameter tractable} if the membership of an instance
$(x,k)$ in $\Sigma^* \times \mathbb{N}$ can be decided in time
$f(k)|x|^{O(1)},$ where $f$ is a function of the
parameter $k$ only. 
It is customary in parameterized algorithms to often write only the exponential part of $f(k)$: $O^*(t(k)):=O(t(k)(kn)^{O(1)})$.

Given a parameterized problem $L$,
a \emph{kernelization of $L$} is a polynomial-time
algorithm that maps an instance $(x,k)$ to an instance $(x',k')$ (the
\emph{kernel}) such that (i)~$(x,k)\in L$ if and only if
$(x',k')\in L$, (ii)~ $k'\leq g(k)$, and (iii)~$|x'|\leq g(k)$ for some
function $g$. The function $g(k)$ is called the {\em size} of the kernel.

It is well-known that
a parameterized problem $L$ is fixed-parameter
tractable if and only if it is decidable and admits a
kernelization. Due to applications, low degree polynomial size kernels are of main interest. Unfortunately, many fixed-parameter tractable problems do not have
kernels of polynomial size unless the polynomial hierarchy collapses to the third level, see, e.g., \cite{BodlaenderEtAl2009a,BoJaKr2011,BodlaenderEtAl2009}.
For further background and terminology on parameterized complexity we
refer the reader to the monographs~\cite{DowneyFellows99,FlumGrohe06,Niedermeier06}.

\begin{definition}\label{TDdef}
A \emph{tree decomposition} of a graph $G = (V,E)$ is a pair $(\mathcal{X},\mathcal{T})$, where $\mathcal{T} = (I,F)$ is a tree and $\mathcal{X} = \{ X_i: i \in I\}$ is a collection of subsets of $V$ called \emph{bags}, such that:
\begin{enumerate}
 \item $\bigcup_{i \in I} X_i  = V$;
 \item For every edge $xy \in E$, there exists $i \in I$ such that $\{x,y\} \subseteq X_i$;
 \item For every $x \in V$, the set $\{i: x \in X_i\}$ induces a connected subgraph of $\mathcal{T}$.
\end{enumerate}
The \emph{width} of $(\mathcal{T}, \mathcal{X})$ is $\max_{i \in I}|X_i|-1$. The \emph{treewidth} of a graph $G$ is the minimum width of all tree decompositions of $G$.
\end{definition}

To distinguish between vertices of $G$ and $\cal T$, we call vertices of $\cal T$ {\em nodes}.
We will often speak of a bag $X_i$ interchangeably with the node $i$ to which it corresponds in $\mathcal{T}$. Thus, for example, we might say two bags are {\em neighbors} if they correspond to nodes in $\mathcal{T}$ which are neighbors.
We define the \emph{descendants} of a bag $X_i$ as follows: every child of $X_i$ is a descendant of $X_i$, and every child of a descendant of $X_i$ is a descendant of $X_i$.

\begin{definition}
A \emph{nice tree decomposition} of a graph $G=(V,E)$ is a tree decomposition $(\mathcal{X},\mathcal{T})$ such that ${\cal T}$ is a rooted tree, and each node $i$  falls under one of the following classes:
 \begin{itemize}
  \item {\bf $i$ is a Leaf node:} Then $i$ has no children;
  \item {\bf $i$ is an Introduce node:} Then $i$ has a single child $j$, and there exists a vertex $v \notin X_j$ such that $X_i = X_j \cup \{v\}$;
  \item {\bf $i$ is a Forget node:} Then $i$ has a single child $j$, and there exists a vertex $v \in X_j$ such that $X_i = X_j \setminus \{v\}$;
  \item {\bf $i$ is a Join node:} Then $i$ has two children $h$ and $j$, and $X_i=X_h=X_j$.
 \end{itemize}
\end{definition}

It is known that any tree decomposition of a graph can be transformed into a tree decomposition of the same width.

 \begin{lemma} \cite{Klo94} \label{lem:niceform}
Given a tree decomposition with $O(n)$ nodes of a graph $G$ with $n$ vertices, we can construct, in time $O(n)$, a nice tree decomposition of $G$ of the same width and with at most $4n$ nodes.
\end{lemma}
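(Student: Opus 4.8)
The plan is to give a constructive proof: starting from the given tree decomposition $(\mathcal{X},\mathcal{T})$, I would apply a sequence of local, width-preserving transformations until every node matches one of the four types in the definition of a nice tree decomposition, and then bound both the running time and the number of resulting nodes by a linear function of $n$. Throughout, the width $t$ is treated as a constant, so that operations on individual bags cost $O(1)$ and the overall procedure can run in $O(n)$ time.

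First I would root $\mathcal{T}$ at an arbitrary node and \emph{reduce} it: whenever two adjacent bags satisfy $X_i\subseteq X_j$, contract the edge and keep $X_j$. This never increases the width, and a single bottom-up pass does it in $O(n)$ time. After reduction, no bag is contained in an adjacent one, and a standard argument then bounds the number of nodes by roughly $n$: for every non-root node $i$ with parent $p$ we have $X_i\not\subseteq X_p$, so some $v\in X_i\setminus X_p$ exists, and by the connectivity condition of Definition~\ref{TDdef} the node $i$ is the unique topmost bag containing $v$; mapping $i\mapsto v$ is therefore injective into $V$. Securing this $\le n+1$ bound on the starting decomposition is what later makes the leading constant small (namely $4$) rather than merely $O(1)$, and this is why reduction is needed even though the input already has $O(n)$ nodes.

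Next I would reach nice form in three further steps. \textbf{(i) Binarize.} A node with $d\ge 2$ children is replaced by a caterpillar of $d-1$ copies carrying the same bag, so that each node has at most two children and every two-child node has both children equal to its own bag, making it a \emph{Join} node; since the total number of children equals (number of nodes)$-1$, this adds at most $n$ nodes. \textbf{(ii) Equalize edges.} On any remaining parent--child edge whose bags $B$ (parent) and $A$ (child) differ, insert a chain that, read downward, first removes the vertices of $B\setminus A$ one at a time (\emph{Introduce} nodes) and then adds the vertices of $A\setminus B$ one at a time (\emph{Forget} nodes), realizing the transition $B\to A$ through single-vertex changes. \textbf{(iii)} The childless nodes are already \emph{Leaf} nodes, since the definition imposes no size constraint on them. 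Every bag created in (i)--(iii) is a subset of, or obtained by single-element changes within, an existing bag, so the width is preserved.

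Finally I would count the nodes. Each vertex $v$ is \emph{forgotten} at most once, because the nodes containing $v$ form a subtree with a unique topmost node, so there are at most $n$ Forget nodes; the Join nodes are bounded by the number of leaves of the binary tree and hence by $n$; and the Leaf nodes are likewise bounded after reduction. The main obstacle I anticipate is precisely the last piece of the accounting. Bounding the \emph{Introduce} nodes is not symmetric to the Forget case, since the ``lower boundary'' of a vertex's subtree need not be a single edge, so a naive estimate gives only $O(tn)$; pinning the total down to $4n$ requires combining the initial reduction to $\le n$ nodes with a careful amortized charging of the inserted Introduce/Forget nodes to the vertices they handle, together with clean bookkeeping of how binarization interacts with the equalizing chains. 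Obtaining the qualitative $O(n)$ bound is routine; obtaining the exact constant is the delicate step.
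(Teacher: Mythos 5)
First, a point of reference: the paper contains no proof of this lemma at all --- it is quoted verbatim from Kloks' monograph \cite{Klo94} and used as a black box in the proof of Theorem~\ref{thm:dp}. So your attempt can only be judged against the known construction, not against anything in the paper. Your overall plan (contract containments, binarize, splice in chains of single-vertex changes) is the right family of ideas; your injectivity argument bounding the reduced decomposition by $n+1$ nodes is correct, as is your observation that each vertex is forgotten at most once. One local flaw is worth flagging before the main issue: your caterpillar of $d-1$ copies does not create valid Join nodes. The definition requires a Join node $i$ to have two children $h,j$ with $X_i = X_h = X_j$ \emph{exactly}; in your construction each spine node has one child on the spine (bag $B$, fine) and one child that is an original child $c_\ell$ --- or, after step (ii), the head of an equalizing chain --- whose bag is not $B$. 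The standard fix is to hang a fresh copy of $B$ above each of the $d$ chains, i.e.\ $2d-1$ new nodes per degree-$d$ node rather than $d-1$, which already breaks your ``adds at most $n$ nodes'' accounting.

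The serious gap is the one you yourself flag and then defer: the Introduce-node count. It is not merely delicate bookkeeping; within your framework the $4n$ bound is simply unreachable. Containment-reduction does not control the symmetric difference of adjacent bags, and the chain you insert on an edge between bags $B$ and $A$ is forced to have $|B\setminus A|+|A\setminus B|-1$ internal nodes. Concretely, let $G$ be a clique on $B=\{u_1,\dots,u_{t+1}\}$ together with pendant vertices $v_1,\dots,v_{n-t-1}$ all adjacent to $u_1$, and take the star-shaped tree decomposition with center bag $B$ and leaf bags $X_i=\{v_i,u_1\}$. This decomposition is already reduced (no bag contains an adjacent one) and has about $n$ nodes, yet every one of its $\approx n$ edges has $|B\setminus X_i|=t$, so your steps (i)--(iii) necessarily output $\Theta(tn)$ nodes; no amortized charging of nodes you have already created can bring this down to $4n$, because the charge is against nodes that genuinely exist in your output. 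The missing idea is to first make the decomposition \emph{smooth}: pad every bag to size exactly $t+1$ and arrange that adjacent bags share exactly $t$ vertices (in the example, replace each leaf bag by $\{v_i\}\cup B\setminus\{u_2\}$). Smoothing keeps the width, keeps the node count at most $n-t$, and makes every parent--child transition a single swap, so every equalizing chain has exactly one internal node; the count then comes out to roughly $N + 2(N-1) + (N-1) \le 4n$ nodes. Without a smoothing step (or some other transformation that modifies the bags globally rather than only subdividing edges), the claimed constant cannot be reached, so as written your proof establishes only an $O(tn)$-node nice decomposition.
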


\section{Linear Kernel}\label{sec:ker}

For a vertex $v$ of a graph $G=(V,E)$ and set $X\subseteq V$, let $\deg_{X}(v)$ denote the number of neighbors of $v$ in $X$. 
If $X=V$, we will write $\deg(v)$ instead of $\deg_{V}(v)$.

\begin{lemma}\label{lem:l1} Let $G=(V,E)$ be a graph with no isolated vertices, with maximum degree
$\Delta \ge 2$ and let $|V|\ge 5k$. If $|V|\ge 4k+\Delta$, then $(G,k)$ is a Yes-instance of $k$-LCP.
\end{lemma}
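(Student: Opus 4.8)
The plan is to restate the target combinatorially and then produce two vertex-disjoint ``dense'' regions, one coloured red and one blue. Note that $(G,k)$ is a Yes-instance exactly when $V$ can be split into sets $R$ and $B$ with $G[R]$ and $G[B]$ each containing at least $k$ edges: colouring $R$ red and $B$ blue then gives $r'_f\ge k$ and $b'_f\ge k$, so $\mu_f(G)\ge k$. Thus it suffices to find two disjoint vertex sets each inducing at least $k$ edges. I would organise the argument around a maximum matching $M$ of $G$, writing $t=|M|$. If $t\ge 2k$, we finish immediately: the $2t\ge 4k$ endpoints are distinct, so colouring the endpoints of $k$ of the matching edges red and those of $k$ other matching edges blue yields $k$ red and $k$ blue edges on disjoint vertex sets. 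Hence the interesting regime is $t\le 2k-1$.

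For $t\le 2k-1$, the unmatched set $U=V\setminus V(M)$ is independent (otherwise an unused edge would enlarge $M$), so $W=V(M)$ is a vertex cover with $|W|=2t\le 4k-2$, and every edge incident to $U$ runs into $W$. The two hypotheses enter precisely here and govern $|U|=n-2t$. From $n\ge 5k$ we get $|U|\ge n-4k+2\ge k+2$, so $U$ is large; from $n\ge 4k+\Delta$ we get $|U|\ge \Delta+2>\Delta\ge \deg_U(w)$ for every $w\in W$, so no single cover vertex is adjacent to all of $U$. The construction I would use is a hybrid: build each colour class from (a) matching edges, splitting the $t$ pairs as evenly as possible between the two colours so each colour retains about $\lfloor t/2\rfloor$ monochromatic matching edges, together with (b) $U$--$W$ edges, realised by colouring a vertex $u\in U$ with the colour of one of its neighbours in $W$. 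Since the edges counted in (a) lie inside $W$ and those in (b) emanate from distinct vertices of $U$, they are all distinct, and the total pool of realisable monochromatic edges is at least $t+|U|=n-t\ge 3k+1$, comfortably above the $2k$ we must place.

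The main obstacle is the balancing step: a priori all realisable $U$--$W$ edges could favour one colour, and after committing one region to red we must be certain a second region of at least $k$ edges survives for blue. This is exactly what the slack in the two hypotheses buys. I would first choose the $2$-colouring of $W$ so that the two colour classes receive comparable total $U$-degree — feasible because $|U|>\Delta$ forbids any vertex of $W$ from monopolising $U$, so neither colour can be starved — and then assign each $u\in U$ the colour of a neighbour that tops up the currently deficient side, trading off the occasional broken matching edge against a gained $U$--$W$ edge. I expect verifying that this split simultaneously reaches $k$ for both colours, and checking that the arithmetic closes using both $n\ge 5k$ (which supplies total edge budget $m\ge n/2\ge 5k/2$ and $|U|\ge k+2$) and $n\ge 4k+\Delta$ (which forces $|U|>\Delta$), to be the only non-routine part; the matching dichotomy reduces everything else to the transparent case $t\ge 2k$.
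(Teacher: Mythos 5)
Your setup coincides with the paper's own: both arguments take a maximum matching $M$, dispose of the case $|M|\ge 2k$ at once, note that the unmatched set (your $U$, the paper's $Y$) is independent, and extract from the two hypotheses exactly the inequalities you state ($|U|\ge k+2$ from $n\ge 5k$, and $|U|\ge \Delta+2$ from $n\ge 4k+\Delta$). But from that point on your argument is a plan rather than a proof, and the step you explicitly defer --- ``verifying that this split simultaneously reaches $k$ for both colours'' --- is the entire content of the lemma. Moreover, the feasibility claim you do make for the balancing step is a non sequitur: the fact that $|U|>\deg_U(w)$ for every $w\in W$ does not imply that $W$ can be $2$-colored so that both color classes receive comparable total $U$-degree while each side also keeps about $\lfloor t/2\rfloor$ monochromatic matching edges; all $U$-adjacencies could be concentrated on one or two matched vertices. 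Likewise your pool count $t+|U|\ge 3k+1$ does not transfer to a per-color guarantee, because the pool items interact: a vertex of $U$ can only serve a color that appears among its $W$-neighbors, and recoloring a $W$-vertex to serve it may break a matching edge, i.e.\ delete another pool item. The greedy ``top up the deficient side'' rule is never analyzed, and nothing in the proposal rules out that serving one side keeps bankrupting the other.

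You are also missing the structural fact that makes any such accounting close, and which is the heart of the paper's proof: for every matching edge $e=uv$, $\deg_Y(e)\le\max\{\Delta-1,2\}$, since two distinct unmatched neighbors $x$ of $u$ and $y$ of $v$ would give an augmenting path $xuvy$. With this claim the paper avoids balancing entirely. It chooses a \emph{minimal} subset $M'\subseteq M$ with $\sum_{e\in M'}\deg_Y(e)\ge k-|M'|$ (such $M'$ exists precisely because $n\ge 5k$), and colors $V(M')$ together with all its $Y$-neighbors red; minimality plus the claim bound the red side by $2k+\Delta$ vertices while guaranteeing at least $k$ red edges. The blue side is then won not by counting matching and $U$--$W$ edges but by a robustness argument: each of the at least $n-(2k+\Delta)\ge 2k$ remaining vertices has a blue neighbor (a blue $Y$-vertex has all its neighbors in $V(M\setminus M')$, since any $Y$-neighbor of $M'$ was colored red, and a blue matched vertex has its blue partner), so there are at least $k$ blue edges. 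To complete your route you would need substitutes for both ingredients --- a selection rule with controlled overshoot and an argument that the leftover side cannot starve; as it stands, the proposal locates the difficulty but does not resolve it.
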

\begin{proof}
Suppose that $|V|\ge 4k+\Delta$, but $(G,k)$ is a No-instance of $k$-LCP.

Let $M$ be a maximum matching in $G$ and let $Y$ be the set of vertices which are not end-vertices of
edges in $M.$ If $M$ has at least $2k$ edges, then we may color half of them blue and half
of them red, so we conclude that $|M| < 2k.$

For an edge $e = uv$ in $M$, let $\deg_Y(e) = \deg_Y(u) + \deg_Y(v),$ that is the number
of edges between a vertex in $Y$ and a vertex of $e.$

\vspace{3mm}

\noindent{\bf Claim} {\em For any $e$ in $M,$ $\deg_Y(e) \le \max\{\Delta-1,2\}.$}

\vspace{2mm}

\noindent{\em Proof of Claim:} Suppose that $\deg_Y(e) \ge \Delta$ and let $e = uv.$ As $u$ and $v$ are adjacent,
$d_Y(u)$ and $d_Y(v)$ are each less than $\Delta$. But as $\deg_Y(u) + \deg_Y(v) = \deg_Y(e)
\ge \Delta,$ it follows that $\deg_Y(u) \ge 1$ and $\deg_Y(v) \ge 1.$ Then either $u$ and $v$
have only one neighbor in $Y,$ which is adjacent to both of them (in which
case $\deg_Y(e)=2$), or there exist vertices $x \neq y \in Y$ such that $x$ is adjacent
to $u$ and $y$ is adjacent to $v.$ Then $M$ is not a maximum matching, as $xuvy$ is
an augmenting path, which proves the claim.

\vspace{3mm}

Now let $M'$ be a subset of edges of $M$ such that 

\begin{equation}\label{eq1}
\sum_{e' \in M'} \deg_Y(e') \ge k - |M'|,
\end{equation}

and 

\begin{equation}\label{eq2}
\left[\sum_{e' \in M'}\deg_Y(e')\right] - \deg_Y(e) < k - |M'|, \text{ for any }e \in M'.
\end{equation}

To see that $M'$ exists observe first that $M'=M$ satisfies \eqref{eq1}. Indeed, suppose it is not true.
Then  $|V| < |V(M)| + k - |M| = k +|M| < 5k,$ a contradiction with our assumption that $|V|\ge 5k$.
Now let $M'$ be the minimal subset of $M$ that satisfies \eqref{eq1}, and observe that by minimality $M'$ also satisfies \eqref{eq2}.

Observe that $|M'|\le k.$
Then by the Claim, we have that $\sum_{e' \in M'} \deg_Y(e') \le k  - |M'|+ \Delta.$
Color $M'$ and all neighbors of $M'$ in $Y$ red, and note that there are at
most $k - |M'| + \Delta + 2|M'| = k +|M'| + \Delta \le 2k+\Delta$ such vertices. The number
of red edges is at least $k - |M'| + |M'| = k.$

Color the remaining vertices of $G$ blue. By assumption there are at least
$4k+\Delta - 2k - \Delta \ge 2k$ such vertices.
As $G$ contains no isolated vertices and $M$ is a maximum matching, the blue
vertices in $Y$ have neighbors in the vertices of $M \setminus M'.$ Thus, every blue
vertex has a blue neighbor. It follows that there are at least $2k/2 = k$
blue edges. Thus, $(G,k)$ is a Yes-instance of $k$-LCP.\end{proof}

We will use the following reduction rules for a graph $G$.

\begin{rrule}\label{r1} Delete isolated vertices.\end{rrule}

\begin{rrule}\label{r2} If there exists a vertex $x$ and set of vertices $S$ such
that $|S|>k$ and every $s \in  S$ has $x$ as its only neighbor, delete a vertex
from $S.$\end{rrule}

\begin{theorem}\label{thm:kernel}
The problem $k$-LCP has a kernel with at most $7k$ vertices.
\end{theorem}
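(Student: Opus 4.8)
The plan is to establish that the two reduction rules are sound and that, once they have been applied exhaustively, the reduced instance is either a trivial Yes-instance or has at most $7k$ vertices. The kernelization then outputs a fixed trivial Yes-instance in the former case and the reduced graph in the latter. So the work splits into (i) proving soundness of Reduction Rules~\ref{r1} and~\ref{r2}, and (ii) bounding the number of vertices of a reduced instance that is not recognised as a Yes-instance.

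For soundness, Reduction Rule~\ref{r1} is immediate: an isolated vertex lies in no monochromatic edge under any $2$-colouring, so $\mu(G)$ is unchanged. For Reduction Rule~\ref{r2}, let $S$ with $|S|>k$ be the set of vertices whose only neighbour is $x$, and let $s\in S$ be deleted. The direction $\mu(G-s)\ge k\Rightarrow\mu(G)\ge k$ holds trivially, since $G-s$ is a subgraph of $G$. For the converse, take a colouring $f$ of $G$ with $\mu_f(G)\ge k$ and assume w.l.o.g.\ that $x$ is red. Because every vertex of $S$ is adjacent only to $x$, no vertex of $S$ is incident to a blue edge, so the set of blue edges is unaffected by any recolouring of $S$. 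Recolouring all $\ge k$ vertices of $S\setminus\{s\}$ red makes each edge from $x$ to $S\setminus\{s\}$ red, producing at least $k$ red edges while at least $k$ blue edges survive; hence $\mu(G-s)\ge k$.

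I would then apply both rules exhaustively; since each strictly decreases $|V|$, this terminates in polynomial time, and the reduced graph $G$ has no isolated vertices and at most $k$ degree-$1$ neighbours at every vertex. The crucial extra ingredient is a claim bounding the maximum degree $\Delta$: if $\Delta\ge 3k$ then $(G,k)$ is a Yes-instance. To prove it, let $x$ have degree $\ge 3k$; by Reduction Rule~\ref{r2} at most $k$ of its neighbours have degree $1$, so at least $2k$ neighbours have degree $\ge 2$ and thus an edge in $G-x$. These $\ge 2k$ vertices are covered by the edges of $G-x$ incident to them, so $G-x$ has at least $k$ edges; pick any $k$ of them as an edge set $F$, which spans at most $2k$ vertices. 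Colour $V(F)$ blue, yielding at least $k$ blue edges. Since $\deg(x)\ge 3k\ge |V(F)|+k$, the vertex $x$ has at least $k$ neighbours outside $V(F)$; colour $x$ and $k$ of them red and everything else blue, giving at least $k$ red edges, so $\mu(G)\ge k$. Combining this with Lemma~\ref{lem:l1} finishes the bound: if the reduced $G$ has $|V|>7k$ then $|V|\ge 5k$, and either $\Delta\le 1$ (so $G$ is a disjoint union of more than $2k$ edges, trivially Yes), or $\Delta\ge 3k$ (Yes by the claim), or $2\le\Delta\le 3k-1$, in which case $|V|>7k>4k+\Delta$ and Lemma~\ref{lem:l1} gives Yes. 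Thus $|V|>7k$ always forces a Yes-instance, and otherwise the reduced graph has at most $7k$ vertices.

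The step I expect to be the main obstacle is bounding $\Delta$, because Reduction Rule~\ref{r2} only limits the number of degree-$1$ neighbours, not the degree itself, so the inequality $|V|<4k+\Delta$ from Lemma~\ref{lem:l1} is useless on its own and must be supplemented by the separate high-degree construction. The delicate point in that construction is guaranteeing that, after reserving up to $2k$ vertices to realise the $k$ blue edges, the high-degree vertex $x$ still has at least $k$ neighbours left to realise $k$ red edges; this is precisely why the threshold $3k=2k+k$ arises and why it dovetails with the $4k+\Delta\le 7k$ bound produced by Lemma~\ref{lem:l1}.
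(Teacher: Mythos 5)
Your proof is correct, but its core differs from the paper's own argument. Both proofs reduce by Rules~\ref{r1} and~\ref{r2} and then lean on Lemma~\ref{lem:l1}; the difference lies in how the maximum degree is tamed so that the hypothesis $|V|\ge 4k+\Delta$ becomes usable. The paper first shows (its Claim A) that a No-instance cannot contain two vertices of degree exceeding $2k$ and $k$ respectively, then deletes a maximum-degree vertex $x$ together with the set $S$ of vertices privately attached to it, applies Lemma~\ref{lem:l1} to the auxiliary graph $G'=G-(S\cup\{x\})$, whose maximum degree must be below $2k$, obtaining $|V'|\le 6k-1$, and finally adds back the $|S|+1\le k+1$ deleted vertices (this is where Rule~\ref{r2} enters). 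You instead keep $G$ intact and prove a direct degree bound: after exhaustive application of Rule~\ref{r2}, a vertex $x$ of degree at least $3k$ has at least $2k$ neighbours of degree at least $2$, hence $G-x$ contains $k$ edges to colour blue, leaving $x$ with $k$ spare neighbours to colour red; so a reduced No-instance has $\Delta\le 3k-1$ and Lemma~\ref{lem:l1}, applied to $G$ itself, yields $|V|\le 7k$. The budgets decompose differently, the paper's $(4k+2k-1)+(k+1)$ versus your $4k+(3k-1)$ plus slack, but both land on exactly $7k$. Your route buys simplicity: it avoids the auxiliary graph and the paper's Claims B--D, and it sidesteps the mildly delicate step of deducing Claim C from the strict inequalities in Claim A. You also spell out the soundness of Rule~\ref{r2} (recolouring $S\setminus\{s\}$ red once $x$ is assumed red), which the paper leaves implicit even though it is genuinely needed for the kernelization to be complete; your argument for it is correct.
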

\begin{proof}
Assume that $G$ is a graph reduced by Rules \ref{r1} and \ref{r2}. Assume also that $G$ is a No-instance. We will prove that $G$ has at most $7k$ vertices.

\vspace{3mm}

\noindent{\bf Claim A.} {\em There is no pair $x,y$ of distinct vertices such that $\deg(x)> 2k$ and $\deg(y) > k.$}

\vspace{2mm}

\noindent{\em Proof of Claim:}  Suppose such a pair $x,y$ exists. Color $y$ and $k$ of its neighbors, not including $x$, red. This leaves
$x$ and at least $k$ of its neighbors uncolored. Color $x$ and $k$ of its
neighbors blue. 

\vspace{3mm}

Construct $G'=(V',E')$ as follows. Let $x$ be a vertex in $G$ of maximum degree. Let $S$
be the vertices of $G$ whose only neighbor is $x.$ Then let $G ' = G - (S \cup
\{x\}).$ The next claim follows from the definition of $G'$.

\vspace{2mm}

\noindent{\bf Claim B.} {\em The graph $G'$ has no isolated vertices.}



\vspace{3mm}

The next claim follows from the definition of $G'$ and Claim A.

\vspace{2mm}

\noindent{\bf Claim C.} {\em If the maximum degree in $G'$ is at least $2k,$ then $G$ is a Yes-instance of $k$-LCP.}



\vspace{3mm}

The next claim follows from the definition of $G'$ and Rule \ref{r2}.

\vspace{2mm}

\noindent{\bf Claim D.} {\em We have $|V| \le |V'| + k +1.$}


\vspace{3mm}

Observe that if $G'$ was a Yes-instance of $k$-LCP then so would be $G.$ Thus, $G'$ is a No-instance of $k$-LCP.
If the maximum degree in $G'$ is 1, then we may assume that $|V'|< 4k$ as
otherwise by Claim B $G'$ is a matching with at least $2k$ edges and so $(G',k)$ is a
Yes-instance. So, the maximum degree of $G'$ is at least 2.
By Claim C and Lemma \ref{lem:l1}, we may assume that $|V'|
\le 4k + 2k -1 = 6k -1.$ Then by Claim D, $|V| \le 6k -1 + k + 1 = 7k.$
\end{proof}

Using the $7k$ kernel of this section we can get a simple algorithm that tries all 2-colourings of vertices of the kernel.
The running time is $O^*(2^{7k})=O^*(128^k)$. In the next section, we obtain an algorithm of running time $O^*(4^k)$.

\section{Load Coloring Parameterized by Treewidth}\label{sec:tw}

\begin{theorem}\label{thm:dp}
Given a tree decomposition of $G$ of width $t$, we can solve LCP in time
$O(2^{t+1}(k+1)^4n^3)$.
\end{theorem}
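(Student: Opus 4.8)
The plan is to design a dynamic programming algorithm over a nice tree decomposition of $G$, which by Lemma~\ref{lem:niceform} we may assume has width $t$ and $O(n)$ nodes. The key modelling decision is what to store at each bag. Recall we want to maximize $\mu_f(G)=\min\{r'_f,b'_f\}$, and that the minimum of the two quantities is what makes this harder than optimizing a single sum. Since we only care about whether $\mu(G)\ge k$, I would cap both edge counts at $k$: once a color class has $k$ monochromatic edges we need not distinguish larger values. Concretely, for each node $i$ I would compute a table indexed by (a) a coloring $c:X_i\to\{\red,\blue\}$ of the bag's vertices, and (b) a target pair $(p,q)$ with $0\le p,q\le k$, where the entry records whether the subgraph induced on the vertices introduced in the subtree rooted at $i$ admits a 2-coloring \emph{extending} $c$ that produces at least $p$ red edges and at least $q$ blue edges among the already-processed edges. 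Counting $2^{|X_i|}\le 2^{t+1}$ colorings times $(k+1)^2$ target pairs gives the stated $2^{t+1}(k+1)^2$ states per node.

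Next I would work through the recurrences for the four node types. For a \textbf{Leaf} node the table is trivial. For an \textbf{Introduce} node adding vertex $v$, the new vertex contributes no completed monochromatic edges yet (edges incident to $v$ are only ``charged'' when their other endpoint has been forgotten, or we can charge an edge when both endpoints are present in the bag and one is about to be forgotten), so the table is inherited from the child with $v$ colored according to $c$. For a \textbf{Forget} node removing $v$, I would take the best over both colors of $v$ in the child, and when $v$ is forgotten I tally the monochromatic edges between $v$ and the other bag vertices of its color — this is the natural place to charge each edge exactly once, which is where property~2 and property~3 of the tree decomposition guarantee correctness. The \textbf{Join} node is the crux: its two children share the bag coloring $c$, and I must combine a $(p_1,q_1)$ outcome from one child with a $(p_2,q_2)$ outcome from the other so that the total red edges is at least $p$ and total blue at least $q$, being careful not to double-count edges internal to the bag (charge bag-internal monochromatic edges in only one branch or subtract them). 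This convolution over pairs $(p,q)$ of targets is where an extra $(k+1)^2$ factor and the $n$-factors in the running time arise.

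The main obstacle I anticipate is precisely the accounting at Join nodes: ensuring each monochromatic edge is counted exactly once across the two subtrees while performing the min-based combination. The clean fix is to adopt the convention that an edge is charged only at the Forget node of its (first-forgotten) endpoint, so that edges with both endpoints in the current bag are never yet counted in either child's table; then a Join simply adds the red counts and the blue counts of the two children (capping at $k$), with no subtraction needed. I would verify this convention is consistent across all four node types so that at the root — which I may assume is an empty bag — the entry for target $(k,k)$ answers whether $\mu(G)\ge k$. Finally I would bound the running time: $O(n)$ nodes, $2^{t+1}$ colorings and $(k+1)^2$ targets per node, and at a Join node an $O((k+1)^2)$ combination per coloring per target, together with the polynomial bookkeeping, yielding the claimed $O(2^{t+1}(k+1)^4 n^3)$ bound.
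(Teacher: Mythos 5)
Your proposal is correct and follows essentially the same approach as the paper: a dynamic program over a nice tree decomposition whose boolean states are indexed by a coloring of the bag together with red/blue edge counts capped at $k$, with the same $O(2^{t+1}(k+1)^4n^3)$ accounting ($O(n)$ nodes, $2^{t+1}$ bag colorings, $(k+1)^2$ targets, and an extra $(k+1)^2$ convolution at join nodes). The only difference is the edge-charging convention: you charge each edge at the forget node of its first-forgotten endpoint so that joins simply add counts, whereas the paper counts all edges already induced by the subtree's vertices (charging them at leaf and introduce nodes) and therefore subtracts the bag-internal monochromatic edges at join nodes; both conventions are standard and correct.
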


\begin{proof}
Let $G=(V,E)$ be graph and let 
$(\mathcal{X},\mathcal{T})$ be a tree decomposition of $G$ of width $t$, where $\mathcal{T} = (I,F)$  and $\mathcal{X} = \{ X_i: i \in I\}$.
By Lemma \ref{lem:niceform}, we may assume that $(\mathcal{X},\mathcal{T})$ is a nice tree decomposition.
 
%
 
  Let $\desc(X_i)$ denote the set of vertices in $V$ which appear in either $X_i$ or a descendant of $X_i$.
 For each $i \in I$, each $S \subseteq X_i$ and each $r,b \in \{0,1,\ldots ,k\}$, define the boolean-valued function $F(X_i,S,r,b)$ to be true if there exists a $2$-coloring $f: \desc(X_i) \rightarrow \{\red, \blue\}$ such that $f^{-1}(\red) \cap X_i = S$ and there are at least $r$ red edges and at least $b$ blue edges in $G[\desc(X_i)]$. We will say such an $f$ \emph{satisfies} $F(X_i,S,r,b)$.
 
 Let $X_0$ denote the bag which is the root of ${\cal T}$, and 
 observe that $G$ is a {\sc Yes}-instance if and only if $F(X_0, S, k,k)$ is true for some $S \subseteq X_0$.
 We now show how to calculate $F(X_i,S,r,b)$ for each $X_i,S,r$ and $b$.
 Assume we have already  calculated  $F(X_j,S',r',b')$ for all descendants $j$ of $i$ and all values of $S',r',b'$.
Our calculation of $F(X_i,S,r,b)$ depends on whether $i$ is a Leaf, Introduce, Forget or Join node.

%
%

 {\bf $i$ is a Leaf node:}
 As $\desc(X_i)=X_i$ there is exactly one $2$-coloring $f: \desc(X_i) \rightarrow \{\red, \blue\}$ such that $f^{-1}(\red) \cap X_i = S$. It is sufficient to set $F(X_i,S,r,b)$ to be true if and only if this coloring gives at least $r$ red edges and at least $b$ blue edges.
 
 {\bf $i$ is an Introduce node:}
 Let $j$ be the child of $i$ and let $v$ be the vertex such that $X_i \setminus X_j = \{v\}$.
 If $v \in S$, let $r^*$ be the number of neighbors of $v$ in $S$.
 Then for any $2$-coloring on $\desc(X_i)$, the number of red edges in $G[\desc(X_i)]$ is exactly the number of red edges in $G[\desc(X_j)]$ plus $r^*$, and the number of blue edges is the same in $G[\desc(X_i)]$ and $G[\desc(X_j)]$.
 Therefore we may set $F(X_i,S,r,b)$ to be true if and only if $F(X_j,S\setminus \{v\}, \max(r-r^*, 0),b)$ is true.
 Similarly if $v \notin S$, we may set $F(X_j,S,r,b)$ to be true if and only if $F(X_j,S\setminus \{v\}, r, \max(b-b^*,0))$ is true, where $b^*$ is the number of neighbors of $v$ in $X_i \setminus S$.
 
 {\bf $i$ is a Forget node:}
 Let $j$ be the child of $i$ and let $v$ be the vertex such that $X_j \setminus X_i = \{v\}$.
 Observe that $\desc(X_i)=\desc(X_j)$, and so any $2$-coloring of $\desc(X_i)$ has exactly the same number of red edges or blue edges in $G[\desc(X_i)]$ and $G[\desc(X_j)]$.
 Therefore we may set $F(X_i,S,r,b)$ to be true if and only if $F(X_j,S,r,b)$ or $F(X_j,S \cup \{v\},r,b)$ is true.
 
 {\bf $i$ is a Join node:}
 Let $h$ and $j$ be the children of $i$.
 Let $r^*$ be the number of red edges in $X_i$. 
 
 Then observe that if there there is a $2$-coloring on $\desc(X_i)$ consistent with $S$ such that there are $r_h$ red edges in $G[\desc(X_h)]$ and $r_j$ red edges in $G[\desc(X_j)]$, then $r_h,r_j \ge r^*$, and the number of red edges in $G[\desc(X_i)]$ is $r_h + r_j - r^*$.

 Let $r' = \min(r^*, k)$. Then if there are at least $r \le k$ red edges in $G[\desc(X_i)]$, there are at least $r_h$ red edges in $G[\desc(X_h)]$ for some $r_h$ such that $r' \le r_h \le k$, and there are at least $r-r_h + r'$ red edges in $G[\desc(X_j)]$.
 Similarly let $b^*$ be the number of blue edges in $X_i$, and let $b' = \min(b^*,k)$. Then if there are at least $b \le k$ blue edges in $G[\desc(X_i)]$, there are at at least $b_h$ blue edges in $G[\desc(X_j)]$ for some $b_h$ such that $b' \le b_h \le k$, and there are at least $b-b_h + b'$ blue edges in $G[\desc(X_j)]$.
 
%
 
 Therefore, we may set $F(X_i,S,r,b)$ to be true if and only if there exist $r_h,b_h$ such that $r' \le r_h \le k$, $b' \le b_h \le k$, and both $F(X_h, S, r_h, b_h)$ and $F(X_j, S, \max(r-r_h+r', 0), \max(b-b_h+b', 0))$ are true.
 
%
%
%
%

\medskip
 
 It remains to analyse the running time of the algorithm.
 
 We first analyse the running time of calculating a single value  $F(X_i,S,r,b)$ assuming we have already calculated $F(X_j,S',r',b')$ for all descendants $j$ of $i$ and all values of $S',r',b'$.
 In the case of a Leaf node, we can calculate $F(X_i,S,r,b)$ in $O(n+m)$ by checking a single $2$-coloring.
 In the case of an Introduce node, we need to check a single value for the child of $i$, and in the case of a Forget node we need to check two values for the child of $i$. Thus, these cases can be calculated in $O(n+m)$ time.
 Finally, for a Join node, we need to check a value from both children of $i$ for every possible way of choosing $r_h,b_h$ such that $r' \le r_h \le k$ and $b' \le b_h \le k$. There are at most $(k+1)^2$ such choices and so we can calculate $F(X_i,S,r,b)$ in $O((k+1)^2(n+m))$ time.
 
It remains to check how many values need to be calculated.
 As there are at most $O(n)$  bags $X_i$, at most $2^{t+1}$ choices of $S \subseteq X_i$, and at most $k+1$ choices for each of $r$ and $b$, the number of values  $F(X_i,S,r,b)$ we need to calculate is $O(n2^{t+1}(k+1)^2)$.
 As each value can be calculated in polynomial time,  overall we have running time 
 $O(2^{t+1}(k+1)^4n(n+m)) = O(2^{t+1}(k+1)^4n^3).$
 \end{proof}

We will combine Theorem \ref{thm:dp} with the following lemma to obtain Theorem \ref{thm:4k}.

\begin{lemma}
For a graph $G$, in polynomial time, we can either determine that $G$ is a Yes-instance of $k$-LCP, or construct a tree decomposition of $G$ of width at most $2k$.
\end{lemma}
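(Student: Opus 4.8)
The plan is to reduce the problem to the structure of a maximum matching. First I would compute a maximum matching $M$ of $G$ in polynomial time and let $Y$ be the set of unmatched vertices. Since $M$ is maximum, $Y$ is an independent set (an edge inside $Y$ would yield an augmenting path), so $V(M)$ is a vertex cover of $G$ of size $2|M|$. If $|M| \ge 2k$, then coloring the endpoints of $k$ of the matching edges red and the endpoints of $k$ other matching edges blue gives at least $k$ red and at least $k$ blue edges, so $\mu(G) \ge k$ and $G$ is a Yes-instance; we are done. Write $e(S)$ for the number of edges of $G$ with both endpoints in $S$.

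So the real work is when $|M| \le 2k-1$, where I want to output a tree decomposition of width at most $2k$. If in fact $|M| \le k$, then $V(M)$ is a vertex cover of size at most $2k$, and the standard construction (all of $V(M)$ in every bag, together with one private vertex of $Y$ in each leaf bag) already gives width at most $2k$. The difficulty therefore concentrates on the range $k < |M| \le 2k-1$, where the cover $V(M)$ has size up to $4k-2$ and must be shrunk by a factor close to two. The tool for this is the Yes-instance criterion read in reverse: if some vertex set $A$ has $e(A) \ge k$, then in a No-instance its complement must have $e(V \setminus A) < k$, since otherwise coloring $A$ red and $V \setminus A$ blue would certify $\mu(G) \ge k$. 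Applying this with $A$ equal to the endpoints of $k$ matching edges (so that the $k$ matching edges themselves give $e(A) \ge k$) shows that a set of only $2k$ vertices covers all but fewer than $k$ edges of $G$; the uncovered edges form a very sparse graph $H$ with $e(H) < k$ on the independent-plus-leftover-matching part of $V \setminus A$.

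The key step is then to convert this ``$2k$-vertex core plus sparse remainder'' structure into a decomposition of width exactly $2k$: I would put the core into the spine bags together with one vertex of $Y$ at a time, and cover the few edges of $H$ by folding their endpoints into the spine, using that $Y$ is independent and $H$ is sparse to keep every bag of size at most $2k+1$. I expect this tightness to be the main obstacle. The naive matching-cover bound loses a factor of two, yet a genuinely minimum vertex cover cannot be computed in polynomial time, so one cannot simply invoke ``treewidth at most vertex cover''; instead one must argue directly, repeatedly using that two vertex-disjoint subgraphs each with $k$ edges cannot coexist in a No-instance, that the leftover edges of $H$ can always be absorbed without pushing the width past $2k$. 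Once such a decomposition of width at most $2k$ is in hand, feeding it into Theorem~\ref{thm:dp} with $t \le 2k$ yields the promised $O^*(4^k)$ algorithm for $k$-LCP.
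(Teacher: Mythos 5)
Your first two cases are fine: $|M| \ge 2k$ certifies a Yes-instance, and $|M| \le k$ gives a vertex cover of size at most $2k$, hence a decomposition of width at most $2k$ by the standard construction. The gap sits exactly where you place ``the main obstacle,'' and it is not a technicality: the construction you sketch cannot work. Your core $A$ has $2k$ vertices. In the remaining case each component of $H = G[V \setminus A]$ has at most $k-1$ edges, hence at most $k$ vertices, so $H$ has a tree decomposition of width at most $k-1$; but adding all of $A$ to every bag then yields bags of size up to $k + 2k$, i.e., width $3k-1$, not $2k$. The ``folding'' alternative fares no better: any bag that contains all of $A$ together with both endpoints of an edge of $H$ has $2k+2$ vertices, i.e., width $2k+1$, which already violates the bound for a single leftover edge. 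To get under the bound you would have to argue that particular vertices of $A$ can be omitted from particular bags, and that is precisely the structural argument (``repeatedly using that two vertex-disjoint subgraphs each with $k$ edges cannot coexist'') that you defer and never carry out. So the range $k < |M| \le 2k-1$ is genuinely unhandled.

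The missing idea is connectivity, which lets the core pay roughly one vertex per edge instead of two. The paper never looks at a matching: it grows a set $X$ vertex by vertex inside a component with at least $k$ edges, keeping $G[X]$ connected, and stops the first time $|E(X)| \ge k$. Minimality means that $X$ minus the last added vertex is connected with fewer than $k$ edges, hence has at most $k$ vertices, so $|X| \le k+1$: a core of only $k+1$ vertices spanning at least $k$ edges. Then your own dichotomy finishes the job: either $|E(V \setminus X)| \ge k$, in which case coloring $X$ red and the rest blue certifies a Yes-instance, or $G[V \setminus X]$ has fewer than $k$ edges, so it admits a tree decomposition of width at most $k-1$ (one bag per component), and adding $X$ to every bag gives width at most $(k-1) + (k+1) = 2k$. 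Your matching-based core is inherently too large: $k$ disjoint edges must span $2k$ vertices, whereas a connected subgraph with $k$ edges spans at most $k+1$, and that factor is exactly the difference between width $3k-1$ and the required width $2k$.
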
 
\begin{proof}
If every component of $G$ has at most $k-1$ edges then we may easily
construct a tree decomposition of $G$ of width at most $k-1$ (as each
component has at most $k$ vertices).

Now assume that $G$ has a component with at least $k$ edges.
By starting with a single vertex in a component with at least $k$ edges,
and adding adjacent vertices one at a time, construct a minimal set of
vertices $X$ such that $G[X]$ is connected and $|E(X)| \ge k$, where $E(X)$ is the set of edges with both end-vertices in $X$.
Let $v$ be the last vertex added to $X$. Then $G[X \setminus \{v\}]$ is
connected and $|E(X \setminus \{v\})| < k$. It follows that $|X \setminus \{v\}|
\le k$ and so $|X|\le k+1$.

Now if $|E(V \setminus X)| \ge k$, then we may obtain a solution for $k$-LCP by
coloring all of $X$ red and all of $V \setminus X$ blue.
Otherwise, we may construct a tree decomposition of $G[V \setminus X]$ of
width at most $k-1$. Now add $X$ to every bag in this tree decomposition.
Observe that the result satisfies the conditions of a tree decomposition
and has width at most $k-1 + |X| \le 2k$.
\end{proof}

\begin{theorem}\label{thm:4k}
There is an algorithm of running time $O^*(4^k)$ to solve $k$-LCP.
\end{theorem}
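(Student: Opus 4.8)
The plan is to combine the preceding lemma with Theorem~\ref{thm:dp}; no new ideas are needed, as both ingredients are already in place. Given an instance $(G,k)$, I would first run the polynomial-time procedure of the lemma. If it reports that $G$ is a Yes-instance of $k$-LCP, we answer Yes immediately. Otherwise the lemma hands us a tree decomposition of $G$ of width $t \le 2k$, and we feed this decomposition into the dynamic-programming algorithm of Theorem~\ref{thm:dp}, which decides whether $\mu(G) \ge k$ (via whether $F(X_0,S,k,k)$ holds for some $S$).

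Since the decomposition returned by the lemma has width $t \le 2k$, Theorem~\ref{thm:dp} runs in time $O(2^{t+1}(k+1)^4 n^3) \le O(2^{2k+1}(k+1)^4 n^3)$. Absorbing the polynomial factors $(k+1)^4 n^3$ and the constant $2$ into the $O^*$ notation yields $O^*(2^{2k}) = O^*(4^k)$. Adding the polynomial cost of the lemma leaves this bound unchanged, giving the claimed running time.

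The one interface detail I would check is that the tree decomposition produced by the lemma has $O(n)$ nodes, because Theorem~\ref{thm:dp} begins by invoking Lemma~\ref{lem:niceform} to obtain a nice tree decomposition, and that conversion assumes an $O(n)$-node input. This holds: in the lemma, the decomposition of $G[V \setminus X]$ (of width at most $k-1$) is built on at most $n$ vertices and can be taken to have $O(n)$ nodes, and inserting the fixed set $X$ into every bag neither changes the number of nodes nor violates the tree-decomposition axioms. I do not expect a genuine obstacle here; the substantive work was done in establishing the width bound (the lemma) and the $O^*(2^t)$ dynamic program (Theorem~\ref{thm:dp}), so the final step is essentially a bookkeeping combination of the two, with the node-count condition being the only point requiring a line of verification.
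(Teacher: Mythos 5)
Your proposal is correct and follows exactly the paper's (implicit) proof: the paper states Theorem~\ref{thm:4k} precisely as the combination of the width-$2k$ lemma with the $O^*(2^t)$ dynamic program of Theorem~\ref{thm:dp}, yielding $O^*(2^{2k}) = O^*(4^k)$. Your extra check that the lemma's decomposition has $O(n)$ nodes (so that Lemma~\ref{lem:niceform} applies) is a valid and correct observation that the paper leaves unstated.
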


\section{Open Problems}\label{sec:concl}

Our kernel and fixed-parameter algorithm seem to be close to optimal: we do not believe that $k$-LCP admits $o(k)$-vertex kernel or $2^{o(k)}$ running time algorithm unless the Exponential Time Hypothesis fails. It would be interesting to prove or disprove it. It would also be interesting to obtain a smaller kernel or faster algorithm for $k$-LCP.

\medskip

\paragraph{Acknowledgments} We are grateful to the referees for some suggestions that improved results of our paper.

\urlstyle{rm}

\end{document}